\definecolor{orange}{RGB}{255,107,0}
\definecolor{green}{RGB}{0,100,0}
\renewcommand{\a}{\boldsymbol{a}}
\renewcommand{\b}{\boldsymbol{b}}
\renewcommand{\c}{\boldsymbol{c}}
\newcommand{\h}{\boldsymbol{h}}
\newcommand{\p}{\boldsymbol{p}}
\newcommand{\w}{\boldsymbol{w}}
\newcommand{\y}{\boldsymbol{y}}
\newcommand{\x}{\boldsymbol{x}}
\newcommand{\z}{\boldsymbol{z}}
\newcommand{\zero}{\boldsymbol{0}}
\newcommand{\one}{\boldsymbol{1}}
\newcommand{\I}{\boldsymbol{I}}
\newcommand{\D}{\boldsymbol{D}}
\newcommand{\Q}{\boldsymbol{Q}}
\newcommand{\X}{\boldsymbol{X}}
\renewcommand{\H}{\boldsymbol{H}}
\newcommand{\A}{\boldsymbol{A}}
\newcommand{\V}{\boldsymbol{V}}
\newcommand{\T}{{\!\top\!}}
\newcommand{\bbR}{\mathbb{R}}
\newcommand{\bbQ}{\mathbb{Q}}
\newcommand{\bbC}{\mathbb{C}}
\DeclareMathOperator*{\minimize}{\textrm{minimize}}
\newtheorem{proposition}{Proposition}
\definecolor{shadecolor}{RGB}{220,220,220}
\title{NP-Completeness of Multicast Beamforming in Wireless Communication}
\author{%
  Sagar Shrestha \\
  Department of Electrical Engineering and Computer Science\\
  Oregon State University\\
  Corvallis, OR 97331 \\
  \texttt{shressag@oregonstate.edu} \\
  % examples of more authors
  % \And
  % Coauthor \\
  % Affiliation \\
  % Address \\
  % \texttt{email} \\
  % \AND
  % Coauthor \\
  % Affiliation \\
  % Address \\
  % \texttt{email} \\
  % \And
  % Coauthor \\
  % Affiliation \\
  % Address \\
  % \texttt{email} \\
  % \And
  % Coauthor \\
  % Affiliation \\
  % Address \\
  % \texttt{email} \\
}
\begin{document}

\maketitle

\begin{abstract}
In this work, the classical problem of multi-cast beamforming in wireless communication is reconsidered. \cite{sidiropoulos2006transmit} showed that the multi-cast beamforming problem is NP-hard. In this project, we show that the corresponding decision problem for real channel matrices and beamformers is NP-complete. Finally, we carry out simulations to reveal the computational complexity of solving the NP-complete beamforming problem in various setting using SAT/SMT solvers. 
\end{abstract}

\section{Introduction}
    In wireless communication, beamforming refers to a technique by which a transmitter with multiple antennas can focus wireless signal towards a receiver. Multi-cast beamforming refers to focusing a common wireless signal towards multiple recievers that may be spatially distributed. This technique is crucial for efficient audio and video streaming in the multiuser setting \cite{konar2016fast}.  
    In this work, we focus on single group multicast beamforming, which is of great interest in the wireless communication \cite{lu2017efficient, konar2016fast}. 
    
    Single group multicast beamforming was shown to be NP-hard in \cite{sidiropoulos2006transmit}. Specifically, \cite{sidiropoulos2006transmit} showed that any instance of the partition problem \cite{garey1979computers}, which is known to be NP-complete, can be reduced to a special instance of the single group multicast beamforming problem. This showed that such special instance can be reduced to any NP-complete problem, as there exist a one to one mapping with a NP-complete problem. However, the special instance used in \cite{sidiropoulos2006transmit} does not cover any cases interesting or realistic in wireless communication. In this work, we show that a broader subset of the beamforming problem is NP-complete, mainly the case where the entries of channel matrix and beamformer are real numbers instead of complex numbers. Although in practice, the channel matrix are almost always complex, real channel matrix can approximate channel conditions with negligible phase shifts. Similarly, real precoder/beamformer vectors would correspond to discarding quadrature component of modulation. 
    
    Finally, we analyse the size of Boolean formula that will be required to encode the beamforming problem, and offer synthetic data simulations using SMT solvers to find solutions of the considered NP-complete problem under various settings.
    
    \textbf{Notation.} $x$, $\x$, and $\X$ denote a scalar, a vector, and a matrix, respectively. $\x_n$ denotes the $n$th column of matrix $\X$. $x(n)$ denotes the $n$th element of the vector $\x$. $\|\cdot\|_2$ denotes the vector $\ell_2$ norm. $\I_N$, $\one_N$, and $\zero_N$ denote an identity matrix of size $N$, $N$ dimensional vector of ones, and $N$ dimensional vector of zeros. ${\rm Diag}(\cdot)$ returns a diagonal matrix with the input vector on the diagonal. $\X^\T$ and $\X^H$ denote the transpose and hermitian of $\X$.
    
\section{Background}
    In a wireless communication setting, the transmitted signal suffers attenuation or even amplification, along with phase shift due to environmental factors. A channel matrix is used to model the such effect. To that end, we use the following signal model signal model which is experimentally verified and physics backed \cite{goldsmith2005wireless}. Let $s \in \bbC$ denote the common signal to be transmitted from a transmitter with $N$ antennas to $M$ receivers with single antenna each. Suppose that $\H = [\h_1, \dots, \h_M] \in \bbC^{N \times M}$ denote the channel matrix where $\h_m$ is the wireless channel from the transmitter to the $m$th user. Then the received signal $y_m \in \bbC^M$ by the $m$th user can be written as:
    $$ y_m = \h_m^H \w s + \eta_m,$$
    where $\eta_m \in \bbC$ is a zero mean circularly symmetric random Gaussian noise of variance $\sigma^2$ and $\w \in \bbC^N$ is the precoding/beamforming vector computed by the transmitter. The problem of beamforming is to design the beamformer $\w$ such that some utility function is maximized. A popular and useful utility function studied widely in the literature is the \textit{quality of service} (QoS) formulation. Under the QoS formulation, the transmitter tries to meet a specified \textit{signal to noise ratio} SNR at the receiving end, while using as small power as possible at the transmitter. This can be formulated as follows:
    \begin{align}\label{eq:beamforming}
        \minimize_{\w \in \bbC^N} ~ & \|\w\|_2 \\
        \text{subject to} ~ & \frac{|\h_m^H \w|}{\sigma^2} \geq \gamma_m, \forall m \in [M], \nonumber
    \end{align} 
    where $\frac{|\h_m^H \w|}{\sigma^2}$ is the SNR achieved at receiver $m$, and $\gamma_m \in \bbR$ is the minimum specified SNR for receiver $m$. 
    Problem \eqref{eq:beamforming} can be re-written as 
    \begin{align}
        \minimize_{\w \in \bbC^N} ~ & \|\w\|_2 \\
        \text{subject to} ~ & {|\widetilde{\h}_m^H \w|} \geq 1, \forall m \in [M], \nonumber
    \end{align} 
    where $\widetilde{\h}_m = \frac{\h_m}{\gamma_m \sigma_m^2}$. Hence without loss of generality, we assume $\sigma_m^2 \gamma_m = 1, \forall m \in [M]$. Problem \eqref{eq:beamforming} was shown to be NP-hard in \cite{sidiropoulos2006transmit}. In this report, we show that when $\w$ and $\H$ are constrained to be real, the resulting beamforming problem is NP-complete. Therefore, we can use SMT solvers to find solutions to the problem under various setting. 

\section{NP completeness of the beamforming problem}
    Since irrational numbers cannot be represented by a finite size symbols, we restrict $\H \in \bbQ^{N \times M}$, i.e., rational numbers. Then the beamforming problem can be written as
    \begin{align}\label{eq:beamforming_real}
        \minimize_{\w \in \bbR^N} ~ & \|\w\|_2 \\
        \text{subject to} ~ & |\h_m^\T \w| \geq 1, \quad \forall m \in [M], \nonumber
    \end{align} 
    
    Problem \eqref{eq:beamforming_real} is not a decision problem. We first consider the corresponding decision problem where we ask the question whether $\|\w\|_2 \leq \kappa, \kappa \in \bbQ_+$  exists such that $|\h_m^\T \w| \geq 1, \forall m \in [M]$. Let ${\rm BF}$ denote the language that consists of all yes instances of such decision problem, i.e., 
    $$ {\rm BF} = \{ (\H, \kappa) ~|~ \exists \w : \|\w\|_2 \leq \kappa,  |\h_m^\T \w| \geq 1, \forall m \in [M]\},$$
    % where $\A \in \mathbb{S}^{P \times P}, \D \in \mathbb{Q}^{R \times P}, \b \in \mathbb{Q}^{} \kappa \in \bbR_+$ and $\mathbb{Q}$ is the set of rational numbers. 
    At first glance, it appears obvious that ${\rm BF} \in {\rm NP}$ by taking $\w$ to be the witness, and the conditions can be checked in polynomial time. However, for $\w$ to be a valid witness, we have to show that the number of digits in $\w$ is polynomial in that of the input $(\H, \kappa)$. To clarify, if the solution of \eqref{eq:beamforming_real} is $\w^\star$ and $\|\w^\star\|_2 \leq \kappa$, then there exists a witness $\w$ such that the following holds:
    \begin{itemize}
        \item $\w$ satisfies the constraint in \eqref{eq:beamforming_real}, and $\|\w\|_2 \leq \kappa$,
        \item $\w$ is a rational number with number of digits polynomial in size of the total number of digits in $(\H, \kappa)$ 
    \end{itemize}
    
    Next, consider the following proposition:
    \begin{proposition}
    BF is NP-complete.
    \end{proposition}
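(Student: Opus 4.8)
The plan is to discharge the two obligations of NP-completeness in turn: first that $\mathrm{BF}\in\mathrm{NP}$, and second that $\mathrm{BF}$ is NP-hard via a polynomial-time reduction from \textsc{Partition}. Membership is precisely the point the discussion above flagged as non-obvious, and I would treat the witness-size bound as its entire content; the hardness direction is where the combinatorial difficulty really lives.

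For membership I would take the beamformer $\w$ itself as the witness. The verifier checks $\sum_{n} w(n)^2 \le \kappa^2$ and $(\h_m^\T\w)^2 \ge 1$ for all $m\in[M]$ using exact rational arithmetic, which is polynomial-time provided $\w$ has polynomial bit-length. To exhibit such a $\w$, first observe that the feasible set is the union, over sign patterns $\boldsymbol{s}\in\{\pm 1\}^M$, of the polyhedra $P_{\boldsymbol{s}}=\{\w : s_m\,\h_m^\T\w \ge 1,\ \forall m\in[M]\}$ intersected with the ball $\{\|\w\|_2\le\kappa\}$. If the instance is a yes-instance then some $P_{\boldsymbol{s}}$ meets the ball, so its minimum-norm point $\w^\star=\argmin_{\w\in P_{\boldsymbol{s}}}\|\w\|_2$ already satisfies $\|\w^\star\|_2\le\kappa$ and all magnitude constraints, hence is feasible for $\mathrm{BF}$. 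Since $\w^\star$ is the projection of the origin onto $P_{\boldsymbol{s}}$, it is the least-norm point of the affine subspace $\{s_m\h_m^\T\w=1:m\in A\}$ spanned by its active constraints $A$, i.e.\ the solution of a rational linear system. By Cramer's rule together with the Hadamard bound on the relevant determinants, $\w^\star$ is rational with bit-length polynomial in the size of $(\H,\kappa)$, which supplies the witness bound and establishes $\mathrm{BF}\in\mathrm{NP}$.

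For hardness I would reduce from \textsc{Partition}: given positive integers $a_1,\dots,a_n$ with $\sum_i a_i$ even, and vector $\a=(a_1,\dots,a_n)^\T$, a balanced partition exists iff $\a^\T\boldsymbol{s}=0$ for some $\boldsymbol{s}\in\{\pm 1\}^n$, and otherwise $|\a^\T\boldsymbol{s}|\ge 2$ for all such $\boldsymbol{s}$. The template is to construct a rational channel matrix $\H$ together with a set of magnitude constraints whose minimum-norm feasible beamformer is forced onto a $\pm 1$ sign configuration (via lower bounds $|h_m^\T\w|\ge 1$ squeezed against the norm budget $\kappa$), so that the achievable minimum norm is small exactly when that configuration is balanced, and to set a rational threshold $\kappa$ separating the balanced case from the rest. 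I would verify that the resulting channels and threshold are rational reals of polynomial size, so that the instance lies in the real problem \eqref{eq:beamforming_real}, and that the yes/no gap in the minimum norm is bounded below by an inverse polynomial so that a single rational $\kappa$ can separate the two cases.

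The main obstacle is making this squeeze exact and immune to ``cheating.'' The tempting shortcut of folding $\a$ into the objective metric---choosing $\H$ so that $\|\w\|_2^2=\u^\T(\H^\T\H)^{-1}\u$ with $(\H^\T\H)^{-1}\propto \a\a^\T+\epsilon\I_n$ and $\u=\H^\T\w$, so that on the cube the cost becomes $(\a^\T\u)^2+\epsilon\|\u\|_2^2$---fails, because a feasible point with some $|u_m|$ large can drive $\a^\T\u$ to zero without the signs forming a balanced partition, which destroys the encoding. Hence the crux is to design the constraint family (likely a redundant, frame-like set with more constraints than antennas) so that no coordinate can grow without blowing up $\|\w\|_2$ past $\kappa$, thereby forcing the minimizer to a genuine Boolean configuration; this anti-cheating/rounding argument is the analogue of the rounding step in the original complex hardness proof and is where essentially all the difficulty concentrates. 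Once it is in place, the reduction is polynomial-time and, combined with membership, yields that $\mathrm{BF}$ is NP-complete.
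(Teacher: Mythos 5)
Your membership argument is correct, and it takes a genuinely different route from the paper. The paper proves ${\rm BF} \in {\rm NP}$ by guessing the sign pattern $\z \in \{\pm 1\}^M$, observing that for each fixed $\z$ the problem becomes a quadratic program, and invoking the result of \cite{vavasis1990quadratic} that QP is in NP as a black box, so that the witness is the concatenation $\z\y$ with $\y$ the QP witness. You instead bound the witness size directly: the minimum-norm point of $P_{\s}$ is the least-norm solution of the rational linear system formed by its active constraints, hence rational with bit-length polynomial in the input by Cramer's rule and Hadamard's inequality. Your version is self-contained (it essentially re-proves the special case of Vavasis's theorem that is needed), while the paper's is shorter because it cites that theorem; both are sound.

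The hardness half is where your proposal has a genuine gap, and the gap comes from a concrete error: the construction you dismiss as a ``tempting shortcut that fails'' is exactly the reduction that works, and it is the one the paper (following \cite{sidiropoulos2006transmit}) uses. With $\Q = \I_N + \a\a^\T = \V^\T\V$ and channels taken from the rows of $\V^{-1}$, the BF instance is equivalent to minimizing $\p^\T\Q\p = \|\p\|_2^2 + (\a^\T\p)^2$ subject to $p_n^2 \ge 1$ for all $n$, with the threshold set to the \emph{exact} value $N$. Your objection---that a feasible point with some $|p_m|$ large can drive $\a^\T\p$ to zero without the signs encoding a partition---overlooks that such a point can never meet the threshold: feasibility forces $\|\p\|_2^2 \ge N$, so any feasible $\p$ with $\p^\T\Q\p \le N$ must have $\|\p\|_2^2 = N$ exactly, hence $p_n \in \{\pm 1\}$ for every $n$, and then $(\a^\T\p)^2 \le 0$, i.e., $\a^\T\p = 0$. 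The identity block in $\Q$ is itself the ``anti-cheating'' device you claim is missing; no redundant frame-like constraint family is needed, and the reduction is exact rather than gap-based, so no inverse-polynomial separation argument is required either. Because you reject this construction, your hardness argument never materializes---you describe a template but exhibit no concrete $(\H,\kappa)$ and prove no equivalence---so NP-hardness, and with it NP-completeness, is not established by your proposal. (One place where your instincts are sound: the paper is careless about whether $\V$ and $\kappa$ can be taken rational, since a square root of $\I_N + \a\a^\T$ and the threshold $\sqrt{N}$ are generally irrational; this can be repaired, e.g., by a rational $\L\D\L^\T$ factorization with each positive diagonal entry written as a sum of four rational squares, together with a squared-norm threshold. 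But that is a repair of the paper's reduction, not a reason to abandon it.)
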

    \begin{proof}
        To prove the NP-completeness of BF, we first use the fact that quadratic programs are NP-complete \cite{vavasis1990quadratic, garey1979computers}, and show that BF is in NP. Then we use the idea from \cite{sidiropoulos2006transmit} to establish that BF is NP-hard. Combining the two results, we conclude that BF is NP-complete.
        
        \subsection{${\rm BF} \in {\rm NP}$}
        First, we show that BF is in NP by reducing BF to a quadratic program. We represent the decision problem associated with quadratic program as QP, which is defined as follows:
        $$ {\rm QP} = \{ (\A, \b, \c, \D, \kappa) ~|~ \exists \w : \w^\T \A \w + \c^\T \w \leq \kappa,  \D \w \geq \b \}$$
        where $\A \in \mathbb{S}^{P \times P}, \D \in \mathbb{Q}^{R \times P}, \b \in \mathbb{Q}^{R}, \c \in \mathbb{Q}^P, \kappa \in \bbQ_+$. 
        \cite{vavasis1990quadratic} showed that ${\rm QP} \in {\rm NP}$. This implies that there exists a polynomial time TM ${\rm T}_{\rm QP}$ such that:
        $$ {\rm QP} = \{ (\A, \b, \c, \D, \kappa) ~|~ \exists \y : {\rm T}_{\rm QP}((\A, \b, \c, \D, \kappa), \y) = 1\}.$$
        where the witness $\y$ is polynomial in the size of the input.
        % We claim that ${\rm BF} \leq_p {\rm QP}$. To show this we define a polynomial time function $f$ such an instance of BF can be converted to an instance of QP in polynomial time. The description of $f$ is as follows.
        
        Let use define a variable $\z \in \{+1, -1\}^M$. Define $\widetilde{\A}, \widetilde{\b}, \widetilde{\c}, \widetilde{\D}$ as follows:
        \begin{align*}
            \widetilde{\A} &= \I_N \\
            \widetilde{\D}(\z) & = {\rm Diag}(\z) \H^\T \\
            \widetilde{\b} &= \one_M \\
            \widetilde{\c} &= \zero_{N}. 
        \end{align*}
        Then one can write rewrite problem \eqref{eq:beamforming_real} as follows:
        \begin{align}\label{eq:bf_modified}
            \minimize_{\w \in \mathbb{Q}^N, \z \in \{+1, -1\}^M} ~& \w^\T \widetilde{\A} \w + \widetilde{\c}^\T \w\nonumber \\
            \text{subject to} \quad \quad ~& \widetilde{\D}(\z) \w \geq \widetilde{\b}
        \end{align}
        It is straightforward to see that the objective of \eqref{eq:beamforming_real} is equivalent to that of \eqref{eq:bf_modified}. To see how the constraints are equivalent, observe that the $m$th constraint of \eqref{eq:bf_modified} is
        \begin{align*}
            z(m) \h_m^\T \w  & \geq 1.
        \end{align*}
        Since $z(m)$ can take the value $\{+1, -1\}$, the above inequality corresponds to
        $$ \text{either } \h_m^\T \w_N \geq 1 \text{ or } - \h_m^\T \w_N \geq 1,$$
        which is equivalent to $|\h_m^\T \w_N| \geq 1$.
        
        Note that problem \eqref{eq:bf_modified} is not a quadratic program because $\widetilde{\D}(\z)$ is not a constant. The constraints are in fact non-convex, whereas for a quadratic program the constraints need to be affine. However, we can use existential quantifier to write the BF problem as follows:
         $$ {\rm BF} = \{ (\H, \kappa) ~|~ \exists \z \exists \y : {\rm T}_{\rm QP}((\widetilde{\A}, \widetilde{\b}, \widetilde{\c}, \widetilde{\D}(\z), \kappa), \y) = 1\}.$$
        Hence the concatenated string $\z\y$ take the role of a witness. Since, we know that for any value of $\z \in \{ -1, 1\}^N$, $(\widetilde{\A}, \widetilde{\b}, \widetilde{\c}, \widetilde{\D}(\z), \kappa)$ defines a quadratic program, the witness of QP, $\y$, is polynomial in size of the input due to \cite{vavasis1990quadratic}. Finally, as $\z$ can be represented by $N$ bits, it is polynomial in size of the input. Hence $\z\y$ is polynomial in size of the input.
        
        Thus, write 
        $$ {\rm BF} = \{ (\H, \kappa) ~|~ \exists \z\y : {\rm T}_{\rm BF}((\H, \kappa), \z\y) = 1\} $$
        for a polynomial time TM ${\rm T}_{\rm BF}$ detailed in Algorithm \ref{algo:t_bf}, where $\y\z$ is polynomial in the size of the input. 

        \begin{algorithm}
        \caption{${\rm T}_{\rm BF} ((\H, \kappa), \y \z)$}\label{alg:t_bf}
            \begin{algorithmic}
            % \Require $n \geq 0$
            % \Ensure $y = x^n$
            \State $\widetilde{\A} = \I_N$
            \State $\widetilde{\D}  = {\rm Diag}(\z) \H^\T $
            \State $\widetilde{\b} = \one_M $
            \State $\widetilde{\c} = \zero_{N}$
            \State return ${\rm T}_{\rm QP}( (\widetilde{\A}, \widetilde{\b}, \widetilde{\c}, \widetilde{\D}, \kappa), \y )$
            \end{algorithmic}
        \end{algorithm}
        % The polynomial time algorithm ${\rm T}_{\rm BF}$ is defined in Algorithm \ref{algo:t_bf}. The algorithm first constructs the variables $(\widetilde{\A}, \widetilde{\b}, \widetilde{\c}, \widetilde{\D})$ using the witness $\z$. Then, ${\rm T}_{\rm QP}$ is invoked with the witness $\y$. The witness $\z$ can be represented by $M$ bits as $\z \in \{+1, -1\}^M$. After fixing $\z$, the problem \eqref{eq:bf_modified} is a quadratic program, hence the witness $\y$ corresponds to the witness of QP, which we have already established to be of polynomial size. 
        Finally, since the construction of the variables $\widetilde{\A}, \widetilde{\b}, \widetilde{\c}, \widetilde{\D}(\z)$ can be done in polynomial time, and ${\rm T}_{\rm QP}$ is a polynomial time TM, ${\rm T}_{\rm BF}$ is a polynomial time TM. Hence ${\rm BF} \in {\rm NP}$.
        
        \subsection{${\rm PARTITION} \leq_{p} {\rm BF}$}
        ${\rm PARTITION}$\cite{garey1979computers} can be Karp reduced to ${\rm BF}$ by using ideas from \cite{sidiropoulos2006transmit}.
        
        First, the problem PARTITION asks whether a list of integers $a_1, \dots, a_N$ can be partitioned into two sets such that the sum of the two sets are equal. In other words, we search for variables $y_1, \dots, y_N \in \{ +1, -1\}^N$ such that
        $$ \sum_{n=1}^N y_n a_n = 0. $$
        
        Let $\p = [p_1, \dots, p_N]^\T \in \bbR^{N}$. The PARTITION problem is equivalent to asking whether the following problem as a solution of $N$:
        \begin{align}\label{eq:partition_bf}
            \minimize_{\p \in \bbR^{N}} ~& p_1^2 + p_2^2 + \dots + p_N^2 + \left(\sum_{n=1}^N p_n a_n\right)^2 \\
            \text{subject to } ~& p_n^2 \geq 1, \quad n \in [N] \nonumber
        \end{align}
        If the above problem has solution $N$, then $p_n \in \{+1, -1\}$ and $\sum_{n=1}^N p_n a_n = 0$, i.e., $a_1, \dots, a_n$ can be partitioned. Therefore, there is one to one correspondence of the above problem and PARTITION. However, the above problem is a special instance of BF. To see this, we can write the objective of \eqref{eq:partition_bf} as $\p^\T \Q \p$, where $\Q = \I_N + \a \a^\T$, where $\a = [a_1, \dots, a_N]^\T$. Since $\Q$ is a positive definite matrix, we can write $\Q = \V^\T \V$, for some full rank $\V$. Then we make a change of variable to $\widetilde{\w} = \V \y$. This implies that $\y = \V^{-1} \w$. We represent $\widetilde{\h}_n = \V^{-1}(n,:)^\T$. Hence we can equivalently write \eqref{eq:partition_bf} as follows:
        \begin{align}
            \minimize_{\widetilde{\w}  \in \bbR^{n}} &~ \| \widetilde{\w} \|_2^2 \\
            \text{subject to } &~ |\widetilde{\h}_n| \geq 1, \quad n \in [N]. \nonumber
        \end{align}
        Therefore, $\a \in {\rm PARTITION}$ if and only if $((\V^{-1})^\T, N) \in BF$, where $\V$ is obtained by square root decomposition of $\I_N + \a \a^\T$. Since the square root decomposition via SVD, matrix inversion and multiplication can be done in polynomial time, ${\rm PARTITION} \leq_p {\rm BF}$. Hence BF is NP-hard
        
        Since ${\rm BF} \in {\rm NP}$ and ${\rm BF}$ is NP-hard, BF is NP-complete.
    \end{proof}

    \section{Encoding an instance of BF as a Boolean formula}
    In the previous section, we showed that BF is NP-complete. In order to encode an instance of BF as a Boolean formula, we restrict our attention to a fixed size representation (e.g., 64 bits) for each rational numbers, i.e., the input and the witness for ${\rm BF}$. The reason is that if we merely assume that the witness is polynomial in size with respect to the input string, it is difficult to quantify exactly the number of variables and clauses. Secondly, the SMT solver that we will be using also represents the real numbers using a fixed size. 
    
    Let $Q$ be the number of bits used to represent a single scalar. Since the number of scalar variables is $N$ and Boolean variables is $M$, we have $NQ + M$ Boolean variables of interest in total. Note, however, that the total number of Boolean variables in the final formula can be far greater than this, mainly due to the need for representing intermediate results as variables (e.g., each constraint will be converted into a Boolean formula with its own set of variables, and later, variables from all constraints are enforced to be equivalent).
    
    In order to solve problem BF in practice, we will provide the following linear and quadratic inequalities to the SMT solver. 
    \begin{subequations}
    \begin{align}
        & w(1)^2 + w(2)^2 + \dots + w(N)^2 \leq \kappa \label{eq:non_linear_constraint} \\
        & z(1)(w(1)h_m(1) + w(2)h_m(2) + \dots + w(N)h_m(N)) \geq 1 , \forall m \in [M]\label{eq:linear_constraint}
    \end{align}
    \end{subequations}
    The solver attempts to find a solution $\w, \z$ that satisfy the above inequalities. In order to convert the above inequalities into a Boolean formula needed to represent the above inequalities, the following steps are followed:
    \begin{enumerate}
        \item obtain Boolean circuit blocks for summation, multiplication, and comparison, 
        \item compose the circuit blocks in order to obtain individual inequalities
        \item convert the resulting Boolean circuit into Boolean formula, and
        \item add clauses for ensuring variable consistency
        \item count the size of the final Boolean formula. 
    \end{enumerate}
    
    Let $C_{\rm sum}, C_{\rm mult}$, and $C_{\rm comp}$ denote the size of the Boolean formula corresponding to the Boolean circuits for the summation, multiplication, and comparison of two scalar values represented by $Q$ bits. For \eqref{eq:non_linear_constraint}, we require $N$ multiplication blocks, $N-1$ summation blocks, and 1 comparision block. For one inequality of \eqref{eq:linear_constraint}, we require $N$ multiplication blocks, $N-1$ summation blocks, 1 comparison block and one Boolean circuit for changing the sign of $\w_m^T \h$ according to the value of $z_m$. Let $C_{\rm sign}$ be the size of the Boolean formula associated with the Boolean circuit for changing the sign of a scalar represented by $Q$ bits according to the value of another Boolean variable. In total, \eqref{eq:linear_constraint} requires $NM$ multiplications, $(N-1)M$ summations, $M$ sign determination, and $M$ comparisons. 
    
    Finally, we need to ensure that the Boolean variables in the Boolean formula for each of the inequalities are consistent with each other. This can be done by adding extra clauses. For example, if $x^{(1)}, x^{(2)}, x^{(3)}$ denote the variables used in the 3 inequalities to represent the same variable $x$, then consistency between them can be enforced by adding the following clause
    $$\left(\left( x^{(1)} \land x^{(2)} \land x^{(3)} \right) \lor \left( \overline{x}^{(1)} \land \overline{x}^{(2)} \land \overline{x}^{(3)} \right) \right).$$
    Let $C_{\rm consist}$ denote the size of the boolean formula required to ensure consistency between two scalars taking $Q$ bits each. The number of consistency checks required is $M-1$. 
    
    From the above discussion, the total size of the boolean formula, $C_{\rm total}$ is approximately
    $$ C_{\rm total} \approx (NM + N)C_{\rm mult} + (N-1)(M+1) C_{\rm sum} + (M+1) C_{\rm comp} + M C_{\rm sign} + (M-1)C_{\rm consist}.$$
    The above is an approximation because we have ignored the increase in size due to concatenation of different inequalities, and the additional consistency clauses that may be required to compose different blocks.
    
    If we fix one of $M$ and $N$ constant, we can see that the size of the Boolean formula grows linearly with respect to the other parameter. 
    
    \section{Simulations}
    In this section, we run simulations to observe the performance of problem BF for various problem setting. We use SMT solver, Z3 \cite{moura2008z3}, to solve problem BF. Specifically, for different realizations of $\H$ and $\kappa$, we observe whether $(\H, \kappa) \in {\rm BF}$. Further, we observe relation between the various problem parameters and the run-time of the SMT solver. The code used in the following simulations is available online\footnote{https://github.com/shresthasagar/beamforming-np-completeness-toc.git}.

    \textbf{Optimal Solution}:
    In order to verify the correctness of the implementation, we obtain an optimal solution to the given problem with enumeration. The enumeration method works as follows. If we fix $\z$ to be $\widehat{\z} \in \{-1, +1\}^N$, problem \eqref{eq:bf_modified} is a quadratic program, which is a convex program. Hence any convex programming solver can be used to obtain a solution arbitrarily close to the optimal. Therefore, we can evaluate problem \eqref{eq:bf_modified} for all $2^M$ possible values of $\z$ and select the solution $\x^\star$ and $\z^\star$ that attains the minimum objective value $v^\star$. 
    
    \textbf{Data Generation}:
    $M$ and $N$ can be varied arbitrarily. Note that the problem is feasible for any value of $M,N$. All entries of $\H$ are sampled independently from a zero mean Gaussian distribution with variance 1. This is consistent with Rayleigh fading channel \cite{goldsmith2005wireless}, which is a widely adopted channel model in wireless communications research. In Rayleigh fading channel the entries of $\H$ is sampled from circularly symmetric Gaussian distribution. In the following experiments, $\sigma_m^2$ and $\gamma_m$ are set to $1$ for all $m \in [M]$.
    
    \subsection{Sanity Check}
    First, we verify the solution provided by the SMT solver by comparing it to the optimal solution. Table \ref{tab:sanity_check} shows the solution obtained for a particular realization of $\H$ with $(N,M) = (2,3)$ while $\kappa$ is varied relative to the optimal solution $v^\star$ obtained by using the enumeration procedure. One can see that when $(\H, \kappa) \not \in {\rm BF}$, the solver takes much longer time to return failure. when $\kappa \geq v^\star$, the solver returns a solution and successful set membership status in relatively small time.  
    
    \begin{table}[h]
        \centering
        \caption{Set membership, objective value and time for various $\kappa$ relative to the optimal objective value $v^\star$.}
        \begin{tabular}{|c|c|c|c|}
            \hline
             $\kappa$ &  $(\bm H, \kappa) \in {\rm BF}$ & Objective & Time(seconds) \\ \hline
             $0.5 v^\star$ & False & - & 0.29 \\ \hline
             $0.99 v^\star$ & False & - & 72.51 \\ \hline
             $v^\star$ & True & $v^\star$ & 0.05 \\ \hline
             $1.1 v^\star$ & True & $1.07 v^\star$ & 0.08 \\ \hline
             $1.5 v^\star$ & True & $1.31 v^\star$ & 0.06 \\ \hline
             $2 v^\star$ & True & $1.23 v^\star$ & 0.03 \\ \hline
        \end{tabular}
        \label{tab:sanity_check}
    \end{table}
    
    \subsection{Further Experiments}
    Now, we observe solver time for various sizes of problem parameters, i.e., the number of antennas and users. Figure \ref{fig:varying_params} shows the time taken by the solver for various parameter settings. The results are averaged over $15$ randomly sampled $\H$.

    \begin{figure}[h]
        \centering
        \includegraphics[width=\linewidth]{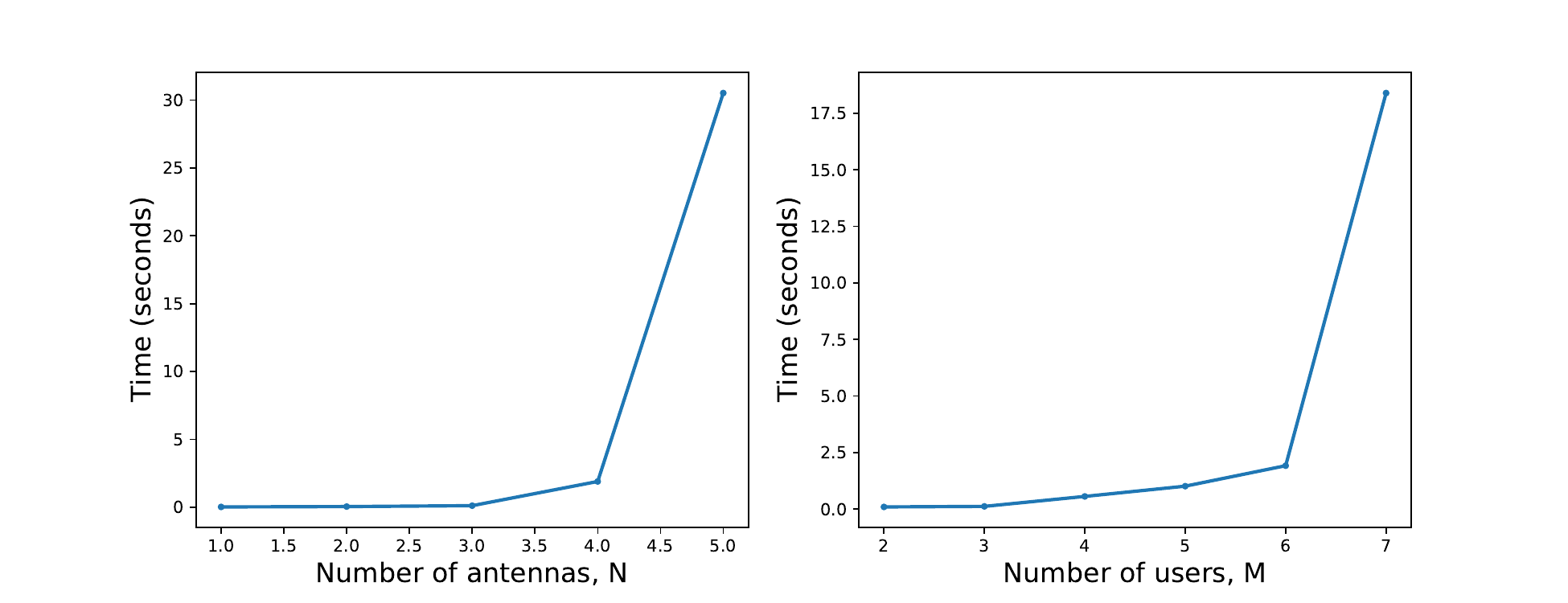}
        \caption{[Left] Time taken by the solver for various number of antennas, when $M=4$. [Right] Time taken by solver for various number users, when $N=4$.}
        \label{fig:varying_params}
    \end{figure}
    
    In figure~\ref{fig:varying_params}[Left] shows the time taken by the solver for varying number of antennas when the number of users is fixed to $4$. Small scale scenarios such as with WiFi can be reflected by this problem size. However, large scale scenarios (e.g., a 5G base station serving many users can contain 10s of antennas) can not be covered. Here, we have set $\kappa$ to $100 v^\star$. We can see that the time taken by the solver grows exponentially with the number of antennas. This is expected as the solving time for Boolean formula grows exponentially with the number of variables.  
    
    Figure~\ref{fig:varying_params}[Right] shows the time taken by the solver for varying number of users when the number of antennas is fixed to $4$. We can see that, similar to the previous case, the solving time increases exponentially with the number of users. However, time consumed seems to grow much faster with increase in $N$ than $M$. The reason could be that the size of the constraint $\|\w\|_2^2 \leq \kappa$ grows only with the increase in $N$.
    
    \section{Conclusion}
    In this project, we studied single group multicast beamforming problem in wireless communication in the case where both channel matrix and beamforming vector is real. We showed that, in such a case, the resulting decision problem is NP-complete. Using this knowledge, we first analysed the size of the Boolean formula needed to represent the problem using finite precision for each scalar. Finally, we observed in simulations the dependence of the running time on various parameters of the beamforming problem.
    
    We should remark that experiments with larger problem size compared to the ones shown in Figure \ref{fig:varying_params} seem to be infeasible (some problem instances take more than 5 minutes to solve) in general. 
    
 \bibliographystyle{IEEEtran}
 \bibliography{main}

\end{document}